\renewcommand{\P}{{\text{pr}}}
\newcommand{\E}{{{E}}}
\newcommand{\bH}{{{H}}}
\newcommand{\m}{{{m}}}
\newcommand{\bx}{{{x}}}
\newcommand{\bZ}{{{Z}}}
\newcommand{\bz}{{{z}}}
\newcommand{\bd}{{{d}}}
\newcommand{\var}{{\text{var}}}
\newcommand{\bY}{{{Y}}}
\newcommand{\bV}{{{V}}}
\newcommand{\Z}{{{Z}}}
\renewcommand{\bH}{{{H}}}
\newcommand{\bD}{{{D}}}
\newcommand{\bI}{{{I}}}
\newcommand{\M}{{{M}}}
\newcommand{\cF}{{\mathcal{F}}}
\newtheorem{theorem}{Theorem}
\newtheorem{corollary}{Corollary}
\newtheorem{lemma}{Lemma}
\newtheorem{fact}{Fact}
\theoremstyle{definition}
\newtheorem{condition}{Condition}
\theoremstyle{remark}
\title{Regression assisted inference for the average treatment effect in paired experiments}
\author{Colin B. Fogarty \thanks{Operations Research and Statistics Group, MIT Sloan School of Management, Massachusetts Institute of Technology, Cambridge MA 02142 (e-mail: \texttt{cfogarty@mit.edu})}}
\date{}
\begin{document}
\maketitle
\begin{abstract}
In paired randomized experiments individuals in a given matched pair may differ on prognostically important covariates despite the best efforts of practitioners. We examine the use of regression adjustment as a way to correct for persistent covariate imbalances after randomization, and present two regression assisted estimators for the sample average treatment effect in paired experiments. Using the potential outcomes framework, we prove that these estimators are consistent for the sample average treatment effect under mild regularity conditions even if the regression model is improperly specified. Further, we describe how asymptotically conservative confidence intervals can be constructed. We demonstrate that the variances of the regression assisted estimators are at least as small as that of the standard difference-in-means estimator asymptotically. Through a simulation study, we illustrate the appropriateness of the proposed methods in small and moderate samples. The analysis does not require a superpopulation model, a constant treatment effect, or the truth of the regression model, and hence provides a mode of inference for the sample average treatment effect with the potential to yield improvements in the power of the resulting analysis over the classical analysis without imposing potentially unrealistic assumptions.
\end{abstract}

\section{Introduction}\label{sec:intro}
The use of paired experiments has historically been limited to cases in which pairs are formed using a small number of binary or categorical covariates. In practice, there are often continuous covariates which are also believed to be predictive for the potential outcomes under treatment and control. \citet{gre04} introduced a form of multivariate matching before randomization which assigns a group of $2n$ experimental units to $n$ pairs to minimize the within-pairs covariate distance, hence improving covariate balance for many variables at the same time. Once the pairs are established, exactly one subject in each pair is randomly assigned to the treatment. Fisherian inference for no treatment effect and Neymanian inference for the average treatment effect can then proceed with respect to the randomization distribution generated by this paired design; see, for example, \citet{ros02cov} and \citet{ima08}.

While mitigable through the method of \citet{gre04}, persistent within-pair discrepancies on the basis of continuous covariates of interest are unavoidable. These discrepancies may, in turn, yield chance imbalances on the basis of these covariates in any given randomization. A common strategy for accounting for remaining imbalances is covariance adjustment. \citet{fre08} and \citet{lin13} investigate the impact of regression adjustment on inference for the sample average treatment effect in completely randomized experiments. Neither correctness of the fitted regression model nor a superpopulation model (that units are drawn independently and identically distributed from a larger population) are assumed in their analysis. Instead, inference proceeds using the physical act of randomization as its sole justification, and asymptotic calculations consider sequences of experiments of increasing size without specifying how the experimental units were sampled. In this context, \citet{lin13} discusses how regression adjustment can yield a consistent estimator of the sample average treatment effect whose asymptotic variance is no larger than the standard difference-in-means estimator regardless of the truth of the underlying model. The performed inference is then agnostic to the truth of the underlying model and instead leverages the adjustment as an algorithmic fit in an attempt to yield a more efficient estimator. See also \citet{aro13} and \citet{blo16} for related work with completely randomized experiments.

In the context of paired experiments, \citet{ros02cov} describes how covariance adjustment can be utilized to yield exact tests under the assumption of a constant treatment effect.  \citet[\S 10]{imb15} discuss how under a superpopulation model, linear regression can yield a consistent estimator for the average treatment effect in the superpopulation. In what follows, we show that covariance adjustment can also be leveraged for inference on the sample average treatment effect in a paired experiment in a manner that is agnostic to the truth of the fitted model.

%Agnostic
% Regression: Freedman's critique, Winston Lin's counterpoint for completely randomized experiments. Pair matching: nonbipartide matching quite appealing with numerous continuous covaraiates, discrepancies may well remain. We provide an estimator for the average treatment effect which utilizes a regression model as a means of reducing variability. We show that this estimator remains finite sample unbiased and has an asymptotic variance which is at least as small as that of the standard estimator based on the difference in means. The expression for the variance cannot be identified as it depends on the missing potential outcomes. An unbiased estimator of an upper bound can be calculated based on the observed data. Mild assumptions - consistent estimator. Allows Neymanian inference to proceed. We show the improvements in situations where the model is correct and more importantly the improvements (or lack of degradation) when the model is incorrect.

\section{A paired randomized experiment with covariates}\label{sec:notation}
The $i$th of $n$ independent pairs contains one individual assigned the treatment, denoted as $Z_{ij} = 1$, and one who receives the control, $Z_{ij'} = 0$, such that $Z_{i1} + Z_{i2} = 1$. Individual $j$ in matched set $i$ has a $P$-dimensional vector of measured covariates $\bx_{ij} = (x_{ij1},...,x_{ijP})$. Each individual has a potential outcome under treatment, $r_{Tij}$, and under control, $r_{Cij}$, $i=1,...,n; j = 1,2$. The pair of potential outcomes $(r_{Tij}, r_{Cij})$, and with it the treatment effect $\tau_{ij} = r_{Tij}-r_{Cij}$, is not jointly observable for any individual. Instead, we observe the response $R_{ij} = r_{Tij}Z_{ij} +r_{Cij}(1-Z_{ij})$ for each individual \citep{ney23, rub74}. Quantities dependent on the assignment vector such as $\Z = (Z_{11}, Z_{12},..., Z_{n2})$ and ${R} = (R_{11}, R_{12},...,R_{n2})$ are random, whereas $\cF = \{(r_{Tij}, r_{Cij}, x_{ij}), i = 1,...,n, j = 1,2\}$ contains fixed quantities. In a paired experiment, $\P(\bZ = \bz\mid\cF) = \P(\bZ = \bz)=  2^{-n}$, and $\P(Z_{ij} = 1 \mid \cF) = \P(Z_{ij} = 1)= 1/2$ for $i=1,...,n$, $j=1,2$.

Write $\bd_{i} = f(\bx_{i1}) - f(\bx_{i2})$ for some function $f: \mathbb{R}^P \rightarrow \mathbb{R}^{K_D}$ with $K_D$ fixed as the difference in transformed covariates between unit $1$ and unit $2$ in matched pair $i$. Similarly write $\m_i = \{g(\bx_{i1})+g(x_{i2})\}/2 - n^{-1}\sum_{i'=1}^n\{g(\bx_{i'1})+g(x_{i'2})\}/2$ for some function $g:\mathbb{R}^{P} \rightarrow \mathbb{R}^{K_M}$, as the average of the transformed covariates in matched pair $i$ relative to the mean across matched pairs. For example, if $x$ is scalar then choosing $f(x) = (x,x^2)$, $g(x) = \exp(x)$ yields $K_D = 2$, $K_M = 1$. Guidance for the choice of the functions $f(\cdot)$ and $g(\cdot)$, which may differ, is given in \S \ref{sec:reg}. Write $\bD$ for the $n\times K_D$ matrix whose $i$th row equals $\bd_{i}^T$, and write $M$ for the  $n\times K_M$ matrix whose $i$th row equals $m_{i}^T$. In what follows we require that $K_D + K_M +1 < n$.  Let $\bH_D = \bD(\bD^T\bD)^{-1}\bD$ be the hat matrix for $\bD$, i.e. the orthogonal projection of $\mathbb{R}^{K}$ onto the column space of $\bD$, and let $\bH_M$ be that hat matrix for $\M$. Define $V_i = 2Z_{i1} - 1$ such that $\E(V_i\mid\cF)=0$, and let $\bV$ be the $n\times n$ matrix with $V_i$ on the $i$th diagonal and zeroes on the off-diagonal. The matrix consisting of the treated-minus-control differences of the $f$-transformed covariates can be written as $\bV\bD$, with hat matrix $\bV\bD(\bD^T\bV\bV\bD)^{-1}\bD\bV = \bV\bH_D\bV$ since $V_i^2 = 1$. Finally, write $A$ for the $n\times (K_D + K_M)$ matrix whose $i$th row is $a_i^T = (V_id_{i1},...,V_id_{iK_D}, m_{i1}, ...,m_{iK_M})^T$, and write $\bH_A$ for the corresponding hat matrix.

\section{The sample average treatment effect}\label{sec:ATE}

Let $\ell_{ij} = (r_{Tij} + r_{Cij})/2$ be the average, or level, of the potential outcomes for individual $j$ in pair $i$, and let $\Delta_i = (\tau_{i1} + \tau_{i2})/2$ be the average of the treatment effects in pair $i$. The observed treated-minus-control difference in responses in pair $i$ is	$Y_{i} = \Delta_i + V_i(\ell_{i1} - \ell_{i2})$.  Write $\ell_j = (\ell_{1j},...,\ell_{nj})$ for $j = 1,2$, $\Delta = (\Delta_1,...,\Delta_n)$, and $\bY = (Y_1,...,Y_n)$. The sample average treatment effect in a paired experiment with $n$ pairs is 
\begin{align*}\bar{\tau}_n &= \frac{1}{2n}\sum_{i=1}^n(\tau_{i1} + \tau_{i2}) = \frac{1}{n}\sum_{i=1}^n{\Delta}_i.\end{align*}

Henceforth we will write $\bar{\tau} = \bar{\tau}_n$, but the dependence of this and other causal estimands on $n$ should be kept in mind. The classical unbiased estimator for $\bar{\tau}$ in a paired experiment is the observed average of the paired differences $\hat{\tau}_C = \bar{Y} = n^{-1}\sum_{i=1}^n Y_{i}$.  The variance for $\hat{\tau}_C$ across randomizations is $\var(\hat{\tau}_C\mid\cF) = n^{-2}\sum_{i=1}^n(\ell_{i1} -\ell_{i2})^2$,
which is unknown in practice because it depends on the missing potential outcomes. \citet{ima08} shows that the classical estimator of the variance of the difference in means in a paired study, $S_C^2 = \{n(n-1)\}^{-1}\sum_{i=1}^n(Y_i - \hat{\tau}_C)^2$,
is always an upper bound on $\var(\hat{\tau}_C\mid\cF)$ in expectation. Furthermore, under mild regularity conditions $nS^2_C - \var({n}^{1/2}\hat{\tau}_C\mid\cF)$ converges to a nonnegative value in probability, and asymptotically conservative inference can be achieved by employing a normal approximation with $S_C^2$ in place of $\var(\hat{\tau}_C\mid\cF)$. 
\section{Regression assisted estimators}\label{sec:reg}
We consider the intercept coefficients from a regression of $\bY$ on $\bV\bD$, $\hat{\tau}_{R1}$, and of $\bY$ on $\bV\bD$ and $M$, $\hat{\tau}_{R2}$, as estimators for $\bar{\tau}$. These can be expressed as
\begin{align*}
\hat{{\tau}}_{R1} = \frac{{e}^T(\bI-\bV\bH_D\bV)\bY}{{e}^T(\bI-\bV\bH_D\bV)e};\;\;\;
\hat{{\tau}}_{R2} = \frac{{e}^T(\bI-\bH_A)\bY}{{e}^T(\bI-\bH_A)e},\end{align*} where $\bI$ is the identity matrix, and ${e}$ is the vector containing $n$ ones (i.e., the intercept column). As discussed in \citet[][\S 10]{imb15}, a regression on $\bV\bD$ encodes a belief that the difference between $f$-transformed covariates in matched pair $i$, $d_i$, may be predictive of the difference in the levels of the potential outcomes $\ell_{i1}-\ell_{i2}$. A regression including both $\bV\bD$ and $M$ encodes a belief that the difference in $f$-transformed covariates in matched pair $i$ is predictive of the difference in the levels, and that the relative level of the $g$-transformed covariates $m_i$ is predictive of the average treatment effect in pair $i$, ${\Delta}_i$. As such, the function $f(\cdot)$ giving rise to $d_i$, should, ideally, be chosen to best reflect the relationship between the transformed differences $d_i = f(x_{i1}) - f(x_{i2})$ and the difference in level of the potential outcomes $\ell_{i1}-\ell_{i2}$. Similarly, the function $g(\cdot)$ should be chosen to best reflect the relationship between the pairwise average of the transformed covariates, $(g(x_{i1}) + g(x_{i2}))/2$, and the pairwise average of the treatment effects $\Delta_i$.

Under a constant effect model $\hat{\tau}_{R1}$ is an unbiased estimator for $\bar{\tau}$, and inference for $\tau$ can be conducted using the techniques described in \citet{ros02cov}.  Without the assumption of additivity, neither $\hat{\tau}_{R1}$ nor $\hat{\tau}_{R2}$ are unbiased for $\bar{\tau}$. Nonetheless, we now demonstrate that both $\hat{\tau}_{R1}$ and $\hat{\tau}_{R2}$ can be used to facilitate inference on $\bar{\tau}$ without constant effects. Moving forward, we impose two regularity conditions.

\begin{condition}\textit{(Bounded Fourth Moments)} \label{cond:1}. There exists a $C <\infty$ such that, for all $n$, $n^{-1}\sum_{i=1}^n\Delta_i^4 < C$, $n^{-1}\sum_{i=1}^n(\ell_{i1}-\ell_{i2})^4 < C$, $n^{-1}\sum_{i=1}^nd_{ik}^4 < C$, $k=1,...,K_D$, and $n^{-1}\sum_{i=1}^nm_{ik'}^4 < C$, $k'=1,...,K_M$.
\end{condition}
\begin{condition}\textit{(Ces\`{a}ro Summability)}. \label{cond:2}  $n^{-1}\sum_{i=1}^n\Delta_i$, $n^{-1}\sum_{i=1}^n\Delta_i^2$, and $n^{-1}\sum_{i=1}^n(\ell_{i1}-\ell_{i2})^2$ converge to finite limits as $n\rightarrow \infty$. Further, $n^{-1}\sum_{i=1}^n(\ell_{i1} - \ell_{i2})d_{ik}$ and $n^{-1}\sum_{i=1}^n\Delta_im_{ik}$ converge to finite limits for $k = 1,...,K_D$, $k' = 1,...,K_M$ as $n\rightarrow \infty$. Finally, $n^{-1}D^TD$ and $n^{-1}M^TM$ converges to a finite, invertible matrices. Call these limits $\Sigma_D$ and $\Sigma_M$ respectively.
\end{condition} Let $\beta_D$ be the limit of $(D^TD)^{-1}D^T(\ell_1-\ell_2)$, and let $\beta_M$ be the limit of $(M^TM)^{-1}M^T\Delta$.
\begin{theorem}\label{thm:1}  Define $\hat{\tau}_{R*} = n^{-1}\sum_{i=1}^n(Y_i - V_i\bd_i^T\beta_D)$, and suppose Conditions \ref{cond:1} and \ref{cond:2} hold. Then, as $n\rightarrow \infty$ with $K_D$ and $K_M$ fixed and conditional upon $\cF$, both $n^{1/2}(\hat{\tau}_{R1}  - \hat{\tau}_{R*})$ and $n^{1/2}(\hat{\tau}_{R2}  -\hat{\tau}_{R*})$ converge in probability to zero. Furthermore, $n^{1/2}(\hat{\tau}_{R*} - \bar{\tau})$ converges in distribution to a Gaussian random variable with mean zero and variance
\begin{align*} \sigma^2_{R*} &= \underset{n\rightarrow\infty}{\lim}n^{-1}\sum_{i=1}^n(\ell_{i1}-\ell_{i2})^2 - \beta_D^T\Sigma_D\beta_D.\end{align*}
\end{theorem}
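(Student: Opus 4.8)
My plan is to first reduce both regression estimators to an explicit correction of the difference-in-means estimator, then prove consistency of the fitted slope on $\bV\bD$, and finally establish a central limit theorem for $\hat{\tau}_{R*}$ directly. Conditional on $\cF$, the only source of randomness is the vector of independent signs $V_i$, each uniform on $\{-1,1\}$ with $\E(V_i\mid\cF)=0$ and $V_i^2=1$. The workhorse observation, used repeatedly, is that for any fixed array $c_1,\dots,c_n$ the weighted sum $n^{-1/2}\sum_{i=1}^n V_ic_i$ has conditional mean zero and conditional variance $n^{-1}\sum_{i=1}^n c_i^2$; under Condition \ref{cond:1} this variance is bounded whenever $c_i$ is a product of two of the quantities appearing there (by Cauchy--Schwarz), so $n^{-1/2}\sum_{i=1}^n V_ic_i = O_p(1)$ and $n^{-1}\sum_{i=1}^n V_ic_i = O_p(n^{-1/2}) = o_p(1)$.

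For the reduction, note that because each regression contains the intercept column $e$, its residuals are orthogonal to $e$, and the resulting normal equation reads $\hat{\tau}_{R1} = \bar{Y} - (n^{-1}\sum_{i=1}^n V_i\bd_i)^T\hat\gamma$, where $\hat\gamma$ is the fitted coefficient on $\bV\bD$. The identical argument for $\hat{\tau}_{R2}$ produces an additional term $-(n^{-1}\sum_{i=1}^n m_i)^T\hat\delta$, which vanishes exactly because the $m_i$ are centered by construction, so that $\sum_{i=1}^n m_i = 0$. Since $\hat{\tau}_{R*} = \bar{Y} - (n^{-1}\sum_{i=1}^n V_i\bd_i)^T\beta_D$, in both cases $n^{1/2}(\hat{\tau}_{R\bullet} - \hat{\tau}_{R*}) = -(n^{-1/2}\sum_{i=1}^n V_i\bd_i)^T(\hat\gamma - \beta_D)$. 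The leading factor is $O_p(1)$ by the workhorse observation, so it suffices to show $\hat\gamma \to_p\beta_D$.

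To prove $\hat\gamma\to_p\beta_D$ I would use the Frisch--Waugh--Lovell theorem to partial out $e$ and examine the normalized Gram matrix $n^{-1}A^T(\bI-H_e)A$ and moment vector $n^{-1}A^T(\bI-H_e)\bY$, where $H_e = ee^T/n$. Using $V_i^2=1$ together with $\sum_{i=1}^n m_i = 0$, the diagonal blocks converge to $\Sigma_D$ and $\Sigma_M$ by Condition \ref{cond:2}, while every off-diagonal and intercept-correction term is a $V$-weighted sum and hence $o_p(1)$; in particular $n^{-1}(\bV\bD)^TM \to_p 0$, so $\bV\bD$ and $M$ are asymptotically orthogonal and the limiting Gram matrix is block diagonal. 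The moment vector converges to $(\Sigma_D\beta_D,\ \Sigma_M\beta_M)$, using $\lim n^{-1}\bD^T(\ell_1-\ell_2) = \Sigma_D\beta_D$ and $\lim n^{-1}M^T\Delta = \Sigma_M\beta_M$. Block inversion then yields $\hat\gamma\to_p\beta_D$ for $\hat{\tau}_{R2}$, and the $\hat{\tau}_{R1}$ case is the sub-problem in which only $\bV\bD$ is present. This establishes both equivalence claims.

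Finally, writing $g_i = (\ell_{i1}-\ell_{i2}) - \bd_i^T\beta_D$, we have exactly $\hat{\tau}_{R*} - \bar{\tau} = n^{-1}\sum_{i=1}^n V_ig_i$, a sum of independent mean-zero terms. Expanding the square and invoking Condition \ref{cond:2} gives $n^{-1}\sum_{i=1}^n g_i^2 \to \lim n^{-1}\sum_{i=1}^n(\ell_{i1}-\ell_{i2})^2 - \beta_D^T\Sigma_D\beta_D = \sigma^2_{R*}$. A central limit theorem for $n^{-1/2}\sum_{i=1}^n V_ig_i$ then follows from the Lyapunov condition with fourth moments: $g_i$ is a fixed linear combination of quantities with bounded fourth moments, so by the $c_r$-inequality $n^{-1}\sum_{i=1}^n g_i^4 = O(1)$, whence $(\sum g_i^4)/(\sum g_i^2)^2 = O(n)/O(n^2)\to 0$, provided $\sigma^2_{R*}>0$ (if $\sigma^2_{R*}=0$ the limit is degenerate and the conclusion is immediate). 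I expect the main obstacle to be the bookkeeping in the slope-consistency step: verifying that each of the many $V$-weighted cross terms is negligible at the correct rate and that the limiting Gram matrix is invertible. Once the asymptotic orthogonality of $\bV\bD$ and $M$ and the centering of the $m_i$ are in place, the slope consistency, the reduction to $\hat{\tau}_{R*}$, and the final normal approximation all follow.
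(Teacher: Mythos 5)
Your proof is correct and follows essentially the same route as the paper's: the same reduction of both intercept estimators to $\hat{\tau}_{R*}$ via the normal equations (with the $M$ contribution vanishing exactly by centering), the same block-diagonal Gram-matrix and moment-vector convergence for slope consistency (the paper's Lemmas 1 and 2), and the same Lyapunov central limit theorem applied to the sign-weighted sum $n^{-1/2}\sum_{i=1}^n V_i g_i$. The only cosmetic difference is that your exact intercept identity $\hat{\tau}_{R\bullet}=\bar{Y}-\bar{W}^T\hat{\gamma}$ sidesteps the paper's Lemma 3 concerning $n^{-1}e^T(I-H_A)e$.
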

Theorem \ref{thm:1} characterizes several useful properties of the regression assisted estimators $\hat{\tau}_{R1}$ and $\hat{\tau}_{R2}$ under Conditions \ref{cond:1} and \ref{cond:2}. First, $\hat{\tau}_{R1}$ and $\hat{\tau}_{R2}$ are consistent estimators of $\bar{\tau}$. Second, $n^{1/2}(\hat{\tau}_{R1} - \bar{\tau})$ and $n^{1/2}(\hat{\tau}_{R2} - \bar{\tau})$ are asymptotically equivalent, with a common Gaussian limiting distribution. Finally, note that the first term in $\sigma^2_{R*}$ is precisely the asymptotic variance of $n^{1/2}\hat{\tau}_C$ given $\cF$. Hence, the asymptotic variances of the regression assisted estimators are no larger than that of the classical difference-in-means estimator, as $\beta_D^T\Sigma_D\beta_D\geq 0$ by positive semi-definiteness of $\Sigma_D$. Better choices of $f(\cdot)$ impact the magnitude of $\beta_D^T\Sigma_D\beta_D$ and hence the degree of variance reduction.

Perhaps surprisingly, $\hat{\tau}_{R1}$ and $\hat{\tau}_{R2}$ have the same asymptotic variance. To see why, the average value of the covariates in a given matched pair does not vary across randomizations; however, the difference between the covariates for the treated and control individuals does, taking values $d_{i}$ and $-d_{i}$ with equal probability. If the slope coefficients on the level of covariates $m_i$ have limits as $n\rightarrow\infty$ (as guaranteed by Conditions \ref{cond:1} and \ref{cond:2}), then asymptotically the contribution of $m_i$ to the prediction of $Y_i$ also does not vary across randomizations, and hence does not contribute to the variance of the estimator. The choice of the function $g(\cdot)$ thus plays no role in improving efficiency. In light of this, is there motivation for including the level of the covariates when estimating $\bar{\tau}$? As we now demonstrate, inclusion of $M$ in the regression allows for the construction of variance estimators that are less conservative then those derived from a regression excluding $M$.

\section{Enabling Neyman-style inference}\label{sec:neyman}
As described in \S \ref{sec:ATE} inference using the classical difference-in-means estimator, $\hat{\tau}_C$, typically proceeds using an upper bound on the variance. If the estimator is a consistent upper bound, inference performed is then asymptotically conservative if a Gaussian reference distribution is asymptotically justified. As will be demonstrated, the nominal variance estimators for the intercept coefficients derived from linear model theory with fixed design and homoskedastic errors can be employed towards this end. Under homoskedasticity the classical variance estimators for the intercept coefficients $\hat{\tau}_{R1}$ and $\hat{\tau}_{R2}$ take the form
\begin{align*}
S^2_{R1}= \frac{(Y-\hat{\tau}_{R1}e)^T(I-VH_DV)(Y - \hat{\tau}_{R1}e)}{(n-K_D-1)e^T(I-VH_DV)e};\;\;\;
S^2_{R2} = \frac{(Y-\hat{\tau}_{R2}e)^T(I-H_A)(Y - \hat{\tau}_{R2}e)}{(n-K_D - K_M-1)e^T(I-H_A)e}.
\end{align*} 
See \S D of the appendix for a derivation. The developments that follow also apply if these estimators are replaced with heteroskedasticity consistent standard errors \citep{lon00}. 
%\begin{align*}
%S^2_{R1} &= S^2_C  + \frac{1}{n^2}\left(\sum_{i=1}^nh_{ii}(1-h_{ii})Y_i^2 + 2\sum_{i<j}h_{ij}^2Y_iY_j - 2\sum_{i < j}h_{ij}(1 + h_{ii} + h_{jj})(V_iY_i)(V_jY_j)\right)\\
%\end{align*}
%S^2_{R2} &= S^2_C  + \frac{1}{n^2}\left(\sum_{i=1}^nh_{ii}(1-h_{ii})Y_i^2 + 2\sum_{i<j}h_{ij}^2Y_iY_j-2\sum_{i < j}h_{ij}(V_iY_i)(V_jY_j)\right)\\
%S^2_{R3} &= S^2_C  + \frac{1}{n^2}\left(\sum_{i=1}^nh_{ii}(1-h_{ii})Y_i^2 -2\sum_{i < j}h_{ij}(V_iY_i)(V_jY_j)\right)

\begin{theorem}\label{thm:2}
Under Conditions \ref{cond:1} and \ref{cond:2} and conditional upon $\cF$, $nS^2_{R1} - \var(n^{1/2}\hat{\tau}_{R*}\mid\cF)$ converges in probability to 
\begin{align*}\underset{n\rightarrow\infty}{\lim}n^{-1} \sum_{i=1}^n(\Delta_i - \bar{\tau})^2 \geq 0. \end{align*} 

Under the same conditions, $nS^2_{R2} - \var(n^{1/2}\hat{\tau}_{R*}\mid\cF)$ converges in probability to 
 \begin{align*} \underset{n\rightarrow\infty}{\lim}n^{-1} \sum_{i=1}^n(\Delta_i - \bar{\tau})^2 - \beta_M^T\Sigma_M\beta_M = \underset{n\rightarrow\infty}{\lim}n^{-1}(\Delta - \bar{\tau}e)^T(I-H_M)(\Delta - \bar{\tau}e) \geq 0.\end{align*} 
\end{theorem}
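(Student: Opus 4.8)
\section*{Proof proposal for Theorem \ref{thm:2}}

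The plan is to analyze $nS^2_{R1}$ and $nS^2_{R2}$ by reducing each to $n^{-1}$ times a residual sum of squares and matching the resulting limits against $\var(n^{1/2}\hat{\tau}_{R*}\mid\cF)$, which I would first compute in closed form. Writing $w_i = \ell_{i1}-\ell_{i2}$, the identity $Y_i - V_i\bd_i^T\beta_D = \Delta_i + V_i(w_i - \bd_i^T\beta_D)$ gives $\hat{\tau}_{R*} = \bar{\tau} + n^{-1}\sum_i V_i(w_i - \bd_i^T\beta_D)$; since the $V_i$ are independent with mean zero and $V_i^2=1$, $\var(n^{1/2}\hat{\tau}_{R*}\mid\cF) = n^{-1}\sum_i (w_i - \bd_i^T\beta_D)^2$, and expanding and invoking Condition \ref{cond:2} gives the deterministic limit $\sigma^2_{R*}$. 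The target differences follow once I show $nS^2_{R1}\to_p\sigma^2_{R*} + \lim n^{-1}\sum_i(\Delta_i-\bar{\tau})^2$ and $nS^2_{R2}\to_p\sigma^2_{R*} + \lim n^{-1}\sum_i(\Delta_i-\bar{\tau})^2 - \beta_M^T\Sigma_M\beta_M$.

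First I would dispatch the denominators. Since $I-VH_DV$ and $I-H_A$ are residual makers for designs of fixed rank, $e^T(I-VH_DV)e = n - V^TH_DV$ and $e^T(I-H_A)e = n - e^TH_Ae$ with both subtracted terms $O_p(1)$: for the former, $E(V^TH_DV\mid\cF)=\mathrm{tr}(H_D)=K_D$ and, for Rademacher $V_i$, the variance is bounded by $2\,\mathrm{tr}(H_D^2)=2K_D$; for the latter, $A^Te = (\sum_i V_i\bd_i,\,0)^T$ because $M^Te=0$ by the centering of $M$, so with $\sum_i V_i\bd_i = O_p(n^{1/2})$ and $(A^TA)^{-1}=O_p(n^{-1})$ one gets $e^TH_Ae=O_p(1)$. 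Hence both denominators equal $n\{1+o_p(1)\}$, and together with the degrees-of-freedom factors this reduces $nS^2_{Rj}$ to $n^{-1}$ times the corresponding residual sum of squares, up to $o_p(1)$. Using that Theorem \ref{thm:1} gives $\hat{\tau}_{Rj}-\bar{\tau}=O_p(n^{-1/2})$, I would then replace $\hat{\tau}_{Rj}$ by $\bar{\tau}$ in each numerator at a cost of $o_p(1)$, leaving $n^{-1}u^T(I-VH_DV)u$ and $n^{-1}u^T(I-H_A)u$ with $u = Y-\bar{\tau}e$, whose entries are $\tilde{\Delta}_i + V_iw_i$, $\tilde{\Delta}_i = \Delta_i-\bar{\tau}$.

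The common total term is $n^{-1}u^Tu = n^{-1}\sum_i\tilde{\Delta}_i^2 + n^{-1}\sum_i w_i^2 + 2n^{-1}\sum_i V_i\tilde{\Delta}_iw_i$, whose cross term is mean-zero with variance $O(n^{-1})$ and hence negligible. For the subtracted projection terms, $Vu$ has entries $V_i\tilde{\Delta}_i + w_i$, so $u^TVH_DVu = (V\tilde{\Delta}+w)^TH_D(V\tilde{\Delta}+w)$, while $A^Tu = (\,D^Tw + \sum_i V_i\tilde{\Delta}_i\bd_i;\ M^T\tilde{\Delta} + \sum_i V_iw_i m_i\,)$. In each the deterministic signal part ($D^Tw$, $M^T\tilde{\Delta}$) is $O(n)$ and the $V_i$-dependent noise part is mean-zero and $O_p(n^{1/2})$; combined with $(D^TD)^{-1}=O(n^{-1})$ and $(A^TA)^{-1}=O_p(n^{-1})$, the noise-by-noise and signal-by-noise contributions scale to $O_p(n^{-1})$ and $O_p(n^{-1/2})$, so only the signal-by-signal terms survive.

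The crux is identifying these surviving terms, where the behaviour of $n^{-1}A^TA$ is decisive: its diagonal blocks converge to $\Sigma_D$ and $\Sigma_M$ (using $V_i^2=1$), but the off-diagonal block $n^{-1}\sum_i V_i\bd_i m_i^T$ is mean-zero with variance $O(n^{-1})$ and so tends to zero in probability, giving $n^{-1}A^TA\to_p\mathrm{diag}(\Sigma_D,\Sigma_M)$. This asymptotic block-diagonality, driven entirely by the randomness of the signs $V_i$, makes the projection onto $\mathrm{col}(A)$ split additively, yielding $n^{-1}u^TH_Au\to\beta_D^T\Sigma_D\beta_D + \beta_M^T\Sigma_M\beta_M$ via $n^{-1}D^Tw\to\Sigma_D\beta_D$ and $n^{-1}M^T\tilde{\Delta}=n^{-1}M^T\Delta\to\Sigma_M\beta_M$ (the centering of $M$ absorbing the $\bar{\tau}$ term); the simpler analogue gives $n^{-1}u^TVH_DVu\to\beta_D^T\Sigma_D\beta_D$. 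Subtracting from $n^{-1}u^Tu$ and then subtracting $\var(n^{1/2}\hat{\tau}_{R*}\mid\cF)\to\sigma^2_{R*}$ produces the two stated limits, and the identity with $I-H_M$—hence nonnegativity of the second limit—follows since $n^{-1}\tilde{\Delta}^TH_M\tilde{\Delta}\to\beta_M^T\Sigma_M\beta_M$, so it equals $\lim n^{-1}(\Delta-\bar{\tau}e)^T(I-H_M)(\Delta-\bar{\tau}e)\geq 0$. I expect the main obstacle to be the bookkeeping establishing the block-diagonality of $n^{-1}A^TA$ together with the uniform negligibility of all noise and cross terms, as this is precisely where the fourth-moment condition and the Rademacher structure of $V$ must be combined.
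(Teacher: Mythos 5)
Your proposal is correct and follows essentially the same route as the paper: control the denominators $e^T(I-VH_DV)e$ and $e^T(I-H_A)e$ (the paper's Lemma 3), recenter the residuals at $\bar\tau$, and exploit the asymptotic orthogonality $n^{-1}M^TVD \to 0$ in probability so that the $D$- and $M$-parts of the projection contribute $\beta_D^T\Sigma_D\beta_D$ and $\beta_M^T\Sigma_M\beta_M$ separately. The only cosmetic difference is that you split $H_A$ via the block-diagonal limit of $n^{-1}A^TA$, whereas the paper first partials out $M$ through a Frisch--Waugh decomposition of $I-H_A$; the surviving terms and the resulting limits are identical.
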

\begin{corollary}\label{cor:1}
Under these conditions, $nS^2_{R1} - nS^2_{R2}$ converges in probability to \begin{align*}\beta_M^T\Sigma_M\beta_M  \geq 0.\end{align*}
\end{corollary}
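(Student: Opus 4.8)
The plan is to read this off directly from Theorem \ref{thm:2}, since the two convergence statements there share the common centering term $\var(n^{1/2}\hat{\tau}_{R*}\mid\cF)$, and forming $nS^2_{R1}-nS^2_{R2}$ cancels it exactly; no further randomization-based computation is needed.

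First I would write
\begin{align*}
nS^2_{R1} - nS^2_{R2} = \left(nS^2_{R1} - \var(n^{1/2}\hat{\tau}_{R*}\mid\cF)\right) - \left(nS^2_{R2} - \var(n^{1/2}\hat{\tau}_{R*}\mid\cF)\right),
\end{align*}
an identity that holds for every $n$ and every realization of $\bZ$. By Theorem \ref{thm:2}, under Conditions \ref{cond:1} and \ref{cond:2} and conditional on $\cF$, the first bracketed term converges in probability to $\lim_{n\rightarrow\infty}n^{-1}\sum_{i=1}^n(\Delta_i-\bar{\tau})^2$, and the second converges in probability to $\lim_{n\rightarrow\infty}n^{-1}\sum_{i=1}^n(\Delta_i-\bar{\tau})^2 - \beta_M^T\Sigma_M\beta_M$. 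Both limits are finite under the stated conditions, so the difference of the two bracketed terms converges in probability to the difference of the limits; the common $\lim_{n\rightarrow\infty}n^{-1}\sum_{i=1}^n(\Delta_i-\bar{\tau})^2$ cancels, leaving $\beta_M^T\Sigma_M\beta_M$.

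For the nonnegativity, I would observe that $\Sigma_M = \lim_{n\rightarrow\infty}n^{-1}M^TM$ is a limit of Gram matrices $n^{-1}M^TM$, each of which is positive semi-definite; hence $\Sigma_M$ is positive semi-definite and $\beta_M^T\Sigma_M\beta_M \geq 0$ for any vector $\beta_M$. There is no genuine obstacle in this argument: the corollary is purely a bookkeeping consequence of Theorem \ref{thm:2}, and the only point meriting care is verifying the finiteness of the two right-hand limits (guaranteed by Conditions \ref{cond:1} and \ref{cond:2}) so that convergence in probability of the difference to the difference of limits is legitimate.
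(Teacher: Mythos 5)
Your proposal is correct and matches the paper's (implicit) derivation: the paper states Corollary \ref{cor:1} as an immediate consequence of Theorem \ref{thm:2}, obtained exactly by subtracting the two probability limits so that the common term $\lim_{n\rightarrow\infty}n^{-1}\sum_{i=1}^n(\Delta_i-\bar{\tau})^2$ cancels, with nonnegativity following from positive semi-definiteness of $\Sigma_M$. No further comment is needed.
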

Theorem \ref{thm:2}, along with Corollary \ref{cor:1}, indicate that both $nS^2_{R1}$ and $nS^2_{R2}$ will be consistent upper bounds for  $\var(n^{1/2}\hat{\tau}_{R*}\mid\cF)$, but that $nS^2_{R2}$ will be asymptotically no larger than $nS^2_{R1}$.  Asymptotic equality is attained under an additive treatment effect model, where both $nS^2_{R1}$ and $nS^2_{R2}$ are consistent for $\var(n^{1/2}\hat{\tau}_{R*}\mid\cF)$. Unlike $nS^2_{R1}$, $nS^2_{R2}$ can also be consistent for $\var(n^{1/2}\hat{\tau}_{R*}\mid\cF)$ if the relative level of the covariates in a pair $m_i$ is perfectly predictive of the pairwise average treatment effects $\Delta_i$, hence highlighting the role of the function $g(\cdot)$ used in defining $M$.  In combination with Theorem \ref{thm:1}, Theorem \ref{thm:2} indicates that on asymptotic grounds $\hat{\tau}_{R2}$ should be preferred over $\hat{\tau}_{R1}$. Confidence intervals for $\bar{\tau}$ of the form $\hat{\tau}_{R1} \pm \Phi^{-1}(1-\alpha/2)S_{R1}$ and $\hat{\tau}_{R2} \pm \Phi^{-1}(1-\alpha/2)S_{R2}$ will be asymptotically conservative as desired. Confidence intervals for the sample average treatment effect $\bar{\tau}$ constructed using $S_{R2}$ will be no longer than those constructed using $S_{R1}$, and hypothesis tests for the null $H_0: \bar{\tau} = \bar{\tau}_0$ through the test statistic $(\hat{\tau}_{R2} - \bar{\tau}_0)/S_{R2}$ and a Gaussian reference distribution will be more powerful than that using $(\hat{\tau}_{R1} - \bar{\tau}_0)/S_{R1}$ while asymptotically maintaining the desired size.

\section{A simulation study}
\subsection{Linear regression under a nonlinear truth}
Our study contains two simulation settings inspired by the functions used in \citet{kan07}. The $s$th of $S$ samples of $n$ pairs is generated by first sampling covariates $w_{ijp}$, $p=1,..,4$ that are unknown to the researcher for each of the $2n$ study participants. For each $p$, we stipulate that $w_{i1p}$ is Gaussian distributed with mean zero and variance 1, $w_{i2p}$ Gaussian distributed with mean $w_{i1p}$ and variance 1/4, and $w_{ijp}$ is independent of $w_{ijp'}$ for $p'\neq p$. Potential outcomes under treatment and control are simulated through $r_{Tij} = \mu_T(w_{ij}) + \varepsilon_{ij}$ and $r_{Cij} = \mu_C(w_{ij}) + \varepsilon_{ij}$, with $\varepsilon_{ij}$ standard normal. The sample average treatment effect for sample $s$ is $\bar{\tau}^{(s)} =  (2n)^{-1}\sum_{i=1}^n\sum_{j=1}^2\left\{\mu_T(w_{ij}) - \mu_C(w_{ij})\right\}$.  The settings vary in the functions $\mu_T(\cdot)$ and $\mu_C(\cdot)$. The two possibilities for $\mu_T(\cdot)$ and $\mu_C(\cdot)$ are (1) parallel response surfaces with $\mu_T(w_{ij}) = \mu_C(w_{ij}) =  27.4w_{ij1} + 13.7(w_{ij2}+w_{ij3}+w_{ij4})$, implying constant treatment effects; and (2) nonparallel response surfaces with $\mu_T(w_{ij}) = 27.4w_{ij1} + 13.7(w_{ij2}+w_{ij3}+w_{ij4})$ and $\mu_C(w_{ij}) = 13.7(w_{ij1} + w_{ij2})+ 3w_{ij3}+27.4w_{ij4}$, implying heterogeneous treatment effects.

The revealed covariates $x_{ij}$ for each individual are complicated functions of $w_{ij}$,
\begin{align*} x_{ij1} = \exp(w_{ij1}/2);\;\;\;\; x_{ij2} =  w_{ij2}/\{1 + \exp(w_{ij1})\} + 10\;\\ x_{ij3} = \{(w_{ij1}w_{ij3})/25+ 0.6\}^3;\;\;\;\; x_{ij4}=(w_{ij2}+w_{ij4} + 20)^2.\end{align*} For each randomization, the researcher computes $\hat{\tau}_{R1}$ and $\hat{\tau}_{R2}$ from a multiple regression using the observed covariates $x_{ij}$ instead of $w_{ij}$ and setting $f(x_{ij}) = g(x_{ij}) = x_{ij}$, in so doing fitting a linear model using covariates that have a highly nonlinear relationship with the potential outcomes. With the experimental units and observed covariates established, we conduct $B$ randomizations wherein we assign the treatment to exactly one unit in each pair. For each randomization, we calculate $\hat{\tau}_C$, $\hat{\tau}_{R1}$ and $\hat{\tau}_{R2}$ based on the observed responses. We then construct normal-based 95\% confidence intervals for the experiment-specific sample average treatment effect $\bar{\tau}^{(s)}$ using $S^2_C$, $S^2_{R1}$, and $S^2_{R2}$. 

\subsection{Results for two sets of experimental units}
Figure \ref{fig:1} illustrates the results of $B=1000$ randomizations for two samples with $n=500$ individuals, one from each of the specifications for the response functions (parallel and nonparallel). The histograms show the across-randomization distributions of the three estimators minus the true value of $\bar{\tau}^{(s)}$. As predicted by Theorem \ref{thm:1} all three estimators are well approximated by normal distributions. Further, we see that the true dispersions of $\hat{\tau}_{R1}$ and $\hat{\tau}_{R2}$ are closely aligned, but that both estimators have smaller variance than $\hat{\tau}_{RC}$ despite the fact that the regression model was misspecified. The second plots for each setting compare the length of normal-based 95\% confidence intervals constructed using the true variance for the estimators to the typical length of 95\% intervals constructed using the sample estimators for those variances, $S^2_{R1}$, $S^2_{R2}$, and $S^2_{C}$ respectively.  Here, the conclusions of Theorem \ref{thm:2} come to bear. Under constant effects (left), the expected sample-based interval overlaps with the intervals based on the true variance, as with constant effects the three variance estimators are consistent. Consequently, the corresponding confidence intervals had coverage of roughly 95\%. With heterogeneous treatment effects (right) we instead see that typical intervals based on the sample at hand are wider than those based on the true variance, as under effect heterogeneity the estimators are instead consistent upper bounds. This conservativeness led to confidence intervals that, for all estimators, had coverage of 100\% in the $B=1000$ randomizations. Comparing $\hat{\tau}_{R1}$ to $\hat{\tau}_{R2}$, we see that while the ideal confidence intervals based on the true variance are quite similar the corresponding sample-based intervals differ, with those based on $\hat{\tau}_{R2}$ being roughly 2/3 the length of those based on $\hat{\tau}_{R1}$ while still exceeding their coverage guarantee. Despite the regression model being misspecified, effect modification was leveraged by $S^2_{R2}$ to yield narrower intervals. In \S E.1 of the appendix, we present detailed results for this simulation study with $n=25, 50, 100$, and $500$. Therein, we see that the predictions of Theorem \ref{thm:1} and \ref{thm:2} provide appropriate guidance even in moderately sized samples. 

\begin{figure}
\begin{center}
% The arguments in the next line are {height}{optional width}{used only by OUP for typesetting}[filename, in directory art]
\includegraphics[scale = .4]{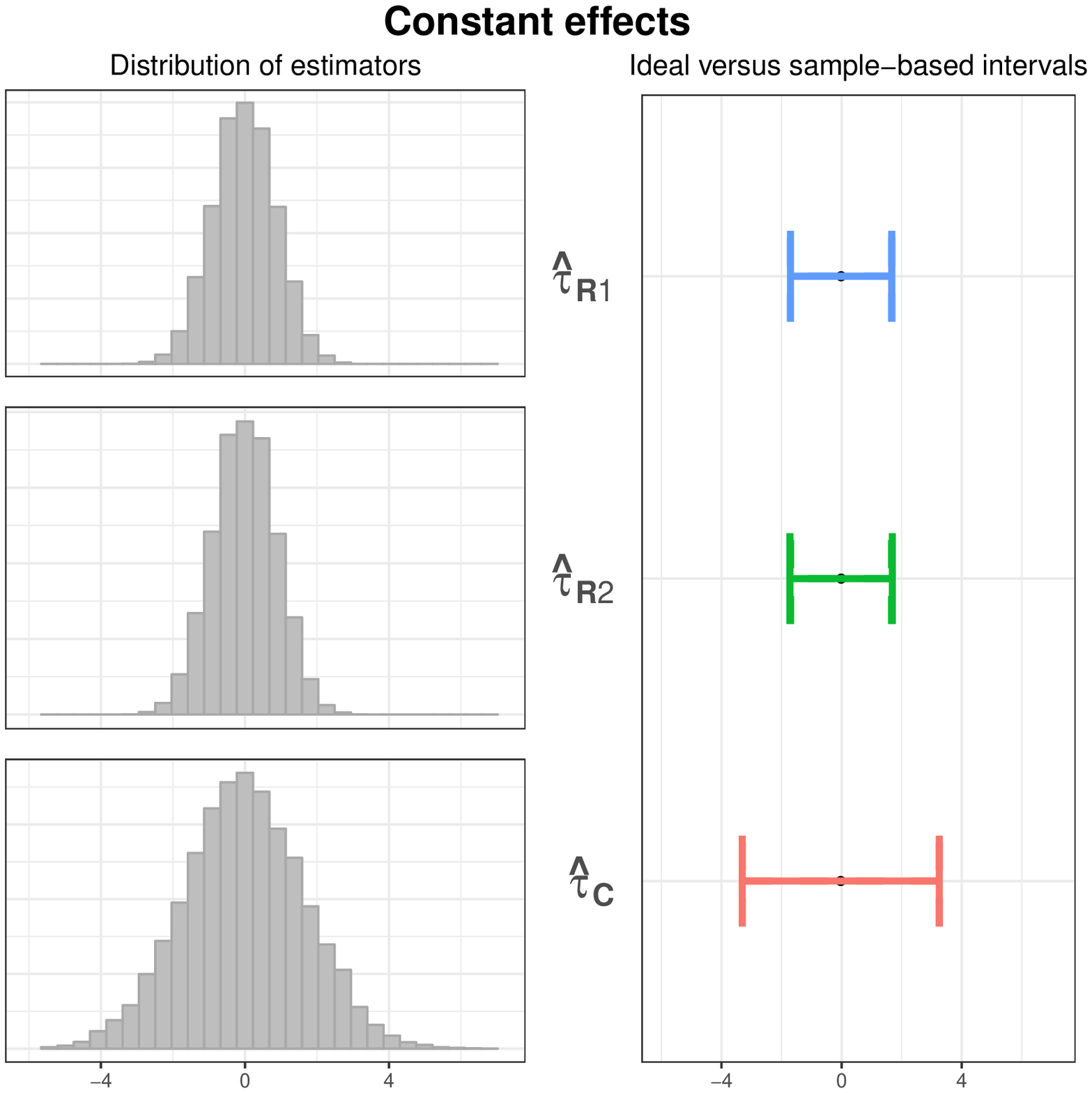}
\hspace{.1 in}
\includegraphics[scale = .4]{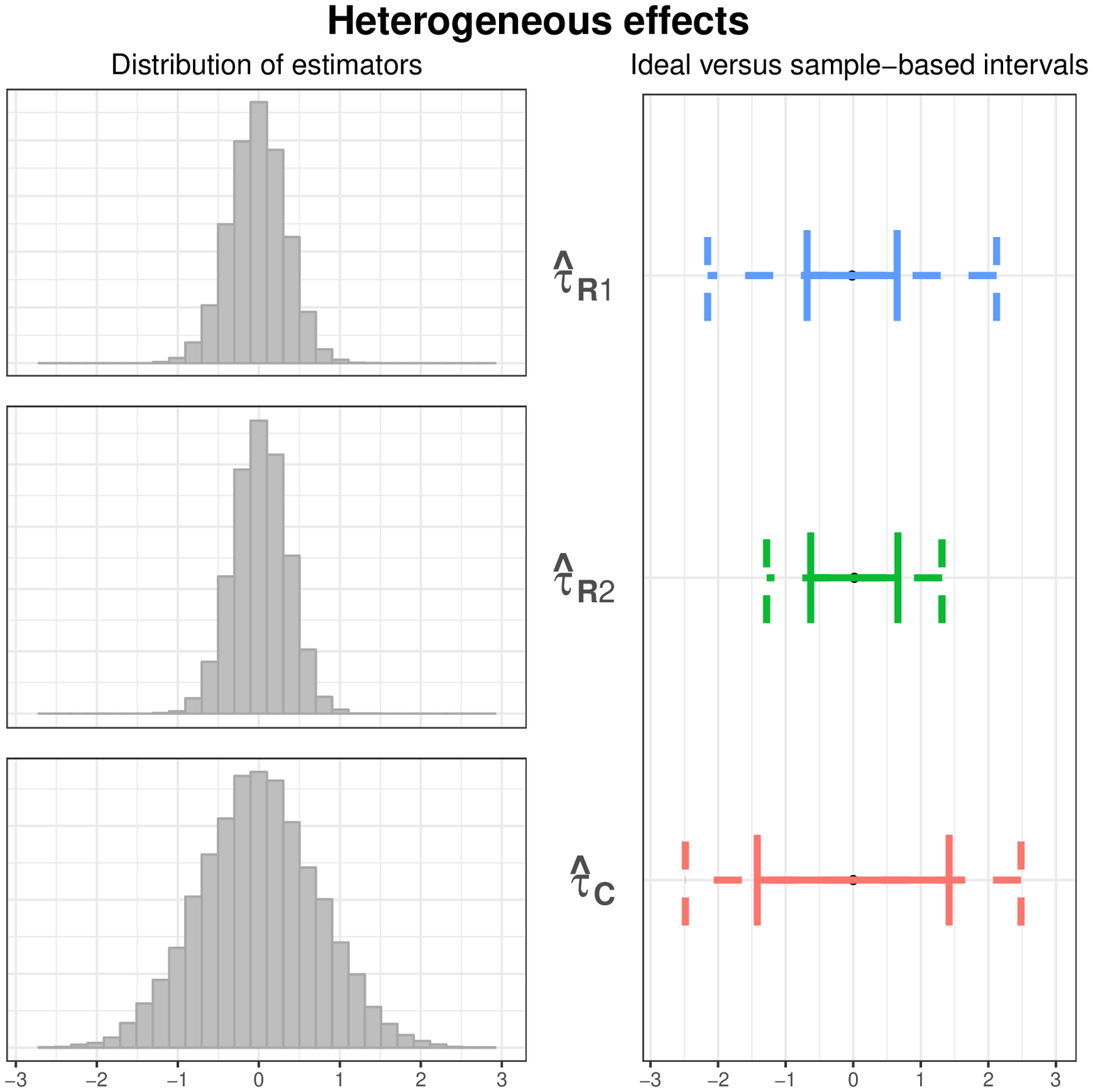}

% note that files may not be rotated
\caption{A graph showing, for each estimator, its simulated distribution, normal-based 95\% range based on true variances (solid), and the approximation to the 95\% range based on estimated variances (dashed) with constant (left) and heterogeneous (right) treatment effects.}
\label{fig:1}
\vspace{-.1 in}
\end{center}
\end{figure}

\section{Superpopulation inference after regression adjustment}
The improvement in inference by means of $\hat{\tau}_{R2}$ and $S^2_{R2}$ is not without qualification. Suppose that one instead considered a superpopulation model wherein $n$ pairs of individuals, and hence their potential outcomes and covariates, were drawn at random from an infinite population and that inference is desired for the expectation of the treatment effect in that superpopulation, call it $\bar{\tau}^{(P)}$. Inference is no longer conducted conditional upon $\cF$ as described previously in this article, but instead must account for variation across realizations of $\cF$. In this setting, $\hat{\tau}_{R1}$ and $\hat{\tau}_{R2}$ remain consistent for $\bar{\tau}^{(P)}$, are asymptotically equivalent, and have true variance that is no larger than that of the difference-in-means estimator $\hat{\tau}_C$. The limiting variances $\var(n^{1/2}\hat{\tau}_{R1})$ and $\var(n^{1/2}\hat{\tau}_{R2})$ are increased by $\var(n^{1/2}\bar{\tau})$, as the sample average treatment effect is itself a random variable under this formulation with $\var(\bar{\tau})$ equal to the expectation of $\{n(n-1)\}^{-1}\sum_{i=1}^n(\Delta_i-\bar{\tau})^2$. Arguments parallel to those in \S 5.3 of a 2017 unpublished paper by the author (arXiv:1706.06469) show that $nS^2_{R1}$ is a consistent estimator of $\var(n^{1/2}\hat{\tau}_{R1})$, while $nS^2_{R2}$ (which achieves its lower probability limit by exploiting effect modification) is an underestimate of $\var(n^{1/2}\hat{\tau}_{R2})$, in turn yielding anti-conservative inference. For superpopulation inference using $\hat{\tau}_{R2}$, one should instead employ the corrected variance estimator $S^2_{R2,P} = S^2_{R2} + n^{-1}\hat{\beta}_M^T\hat{\Sigma}_M\hat{\beta}_M$, where $\hat{\Sigma}_M = (n-1)^{-1}M^TM$ is the sample covariance matrix for the centered covariates $M$ and $\hat{\beta}_M$ is the vector of regression coefficients corresponding to the columns of $M$ from a regression of $Y$ on $A = (VD, M)$ along with an intercept. Defined in this manner, $nS^2_{R2,P}$ is consistent for $\var(n^{1/2}\hat{\tau}_{R2})$. The importance of this correction is highlighted in a simulation study presented in \S E.2 of the appendix. In light of Corollary 1, we see that asymptotically $nS^2_{R2,P}$ is increased by precisely the discrepancy between $nS^2_{R2}$ and $nS^2_{R1}$. In fact, $nS^2_{R1}-nS^2_{R2,P}$ converges in probability to zero, hence eliminating the inferential advantage held by $\hat{\tau}_{R2}$ over $\hat{\tau}_{R1}$ asymptotically. Whether finite-sample properties of inference based on $\hat{\tau}_{R2}$ versus $\hat{\tau}_{R1}$ would lead one estimator to be preferred in a superpopulation setting remains an area for further research. 

\section{Discussion}
This work has focused on the use of agnostic linear regression (i) to yield an estimator of the sample average treatment effect with improved asymptotic efficiency over the difference-in-means, and (ii) to furnish conservative standard error estimators for regression-adjusted estimators which dominate the conventional standard error for the difference-in-means. A natural and important extension of this work would be to consider other forms of regression adjustment in paired experiments by leveraging semiparametric theory, aligning with the approach taken for completely randomized designs by \citet{zha08} and \citet{tsi08}. The focus on linear regression  serves, in part, to provide justification for what remains the most common form of covariance adjustment utilized by practitioners under minimal assumptions about the manner in which the data were generated.

In completely randomized experiments, \citet{aro14} presents sharp variance estimators for the difference-in-means estimator of the sample average treatment effect without covariance adjustment. These results may furnish improvements over the variance estimators considered by \citet{lin13} for regression adjustment in completely randomized designs; however, the natural extension to block-randomized designs requires at least two treated and two control individuals in each block, a feature absent in paired experiments. Nonetheless, the variance estimator $S^2_{R2}$ provides a means of improving the estimator $S^2_{R1}$ when the target of estimation is the sample average treatment effect.

%For $n=25$, confidence intervals constructed using $\hat{\tau}_{R2}$ had coverage that fell short of their nominal coverage in certain experiments in the parallel response setting. In this case, variants of sandwich estimators that target improved small-sample performance may provide intervals with better coverage. Indeed, rerunning the simulation at $n=25$ using the standard error estimator of \citet{efr82} (commonly referred to as 'HC3' in the literature and in statistical software)  which is recommended by \citet{lon00} for use in small samples, the median coverage of nominally 95\% confidence intervals for $n=25$ with parallel response surfaces was 98\% with a 95\% range of (96\%, 100\%), indicating a slight over correction but nonetheless yielding conservative inference. In the nonparallel setting intervals remained conservative at $n=25$, and the gains from using $S_{R2}$ relative to $S_{R1}$ persisted but to a lesser extent, with a median ratio of 0.86 and a 95\% interval of (0.60, 1.25). Use of `HC3' thus conferred some of the benefits of using $\hat{\tau}_{R2}$ as an estimator in the presence of nonparallel responses while maintaining the desired size of the performed inference. For the simulations with $n=50$, $100$, and $500$, the results of Theorem \ref{thm:1}, Theorem \ref{thm:2}, and Corollary \ref{cor:1} seem txxo yield an accurate representation of finite sample performance of estimation and inference using $\hat{\tau}_{R1}$ and $\hat{\tau}_{R2}$ along with the conventional standard errors $S_{R1}$ and $S_{R2}$.

\appendix
\section{Facts and Lemmas}
The following fact about the matrix $M$, true by construction, will be useful moving forward.
\begin{fact} \label{fact:1}$e^Tm_{k'}=0$ and $(I-H_M)e = e$ due to centering of the covariates $m_{k'}, k' = 1,...,K_M$.
\end{fact}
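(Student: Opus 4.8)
The plan is to prove the two assertions in order, observing that the second is an immediate consequence of the first. The heart of the matter is that the columns of $M$ are centered by construction, which is precisely the statement $M^Te = 0$; once the intercept vector $e$ is known to be orthogonal to the column space of $M$, the projection $H_M$ must send it to zero.

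First I would verify $e^Tm_{k'} = 0$ for each $k' = 1,\ldots,K_M$ by direct summation. Writing $g_{k'}(\cdot)$ for the $k'$th coordinate of $g(\cdot)$, the $i$th entry of the $k'$th column of $M$ is
\begin{align*}
m_{ik'} = \frac{g_{k'}(x_{i1}) + g_{k'}(x_{i2})}{2} - \frac{1}{n}\sum_{i'=1}^n \frac{g_{k'}(x_{i'1}) + g_{k'}(x_{i'2})}{2}.
\end{align*}
Summing over $i$, the first term contributes $\sum_{i=1}^n \{g_{k'}(x_{i1}) + g_{k'}(x_{i2})\}/2$, while the subtracted mean, being constant in $i$ and added $n$ times, contributes exactly the same quantity; the two cancel, giving $e^Tm_{k'} = \sum_{i=1}^n m_{ik'} = 0$. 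Stacking these identities across $k' = 1,\ldots,K_M$ yields $M^Te = 0$.

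Second, for the claim $(I - H_M)e = e$, I would expand $H_Me = M(M^TM)^{-1}M^Te$ and substitute $M^Te = 0$ from the previous step, so that $H_Me = 0$ and hence $(I - H_M)e = e$. This uses only that $H_M$ is well defined, i.e. that $M^TM$ is invertible, as presumed in the construction of the hat matrix in \S\ref{sec:notation}.

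I anticipate no genuine obstacle: the identity is a direct consequence of the centering built into $M$. The only point requiring minor care is the per-coordinate bookkeeping in the summation establishing $M^Te = 0$, making sure the pairwise-average mean is subtracted once per pair and cancels cleanly.
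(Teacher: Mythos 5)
Your proof is correct and follows exactly the reasoning the paper intends: the paper states Fact 1 without proof, calling it ``true by construction'' due to the centering of the $m_{k'}$, and your direct summation showing $M^Te=0$ followed by $H_Me = M(M^TM)^{-1}M^Te = 0$ is precisely the argument being implicitly invoked.
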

We now prove three lemmas which will be utilized in the proofs of our main results.
\begin{lemma}\label{lemma:1} $n^{-1}A^TA$ converges in probability to a block-diagonal matrix with $\underset{n \rightarrow\infty}{\lim}n^{-1}D^TD$ in the upper-left block of size $K_D\times K_D$, $\underset{n\rightarrow\infty}{\lim}n^{-1}M^TM$ in the lower-right block of size $K_M\times K_M$, and zeroes in the remaining entries.\end{lemma}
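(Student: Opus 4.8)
The plan is to write $A = (VD, M)$ and expand $A^T A$ into a $2\times 2$ block structure, then analyze each block separately. Since $V$ is diagonal with $V_i^2 = 1$, we have $V^T V = I$, so the upper-left block is $(VD)^T(VD) = D^T D$ and the lower-right block is $M^T M$. Both $D$ and $M$ are fixed given $\cF$, so $n^{-1}D^T D$ and $n^{-1}M^T M$ are non-random and converge to $\Sigma_D$ and $\Sigma_M$ by Condition \ref{cond:2}; deterministic convergence trivially implies convergence in probability, so these two diagonal blocks are immediately handled.

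The substance of the argument lies in the off-diagonal block $n^{-1}(VD)^T M = n^{-1} D^T V M$, whose $(k,k')$ entry is $n^{-1}\sum_{i=1}^n V_i d_{ik} m_{ik'}$. I would show each such entry converges to zero in probability. Conditional on $\cF$ the only randomness is in the $V_i$, which are independent across pairs with $\E(V_i \mid \cF) = 0$ and $V_i^2 = 1$. Hence the conditional mean of the sum is zero, and by independence across pairs its conditional variance is $n^{-2}\sum_{i=1}^n d_{ik}^2 m_{ik'}^2$.

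To bound this variance I would apply the Cauchy--Schwarz inequality, giving $n^{-1}\sum_{i=1}^n d_{ik}^2 m_{ik'}^2 \le (n^{-1}\sum_{i=1}^n d_{ik}^4)^{1/2}(n^{-1}\sum_{i=1}^n m_{ik'}^4)^{1/2}$, which is bounded by $C$ using Condition \ref{cond:1}. The conditional variance is therefore $O(n^{-1})$, and Chebyshev's inequality yields convergence of each off-diagonal entry to zero in probability. Assembling the three block limits shows that $n^{-1}A^T A$ converges in probability to the claimed block-diagonal matrix with $\Sigma_D$ in the upper-left, $\Sigma_M$ in the lower-right, and zeroes elsewhere.

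I do not anticipate a serious obstacle; the only point requiring care is the off-diagonal block, where the key is to recognize that $\{V_i d_{ik} m_{ik'}\}_{i=1}^n$ is a triangular array of independent, conditionally mean-zero terms and to control its second moment via the bounded-fourth-moment condition. Everything else reduces to the deterministic convergence already guaranteed by Condition \ref{cond:2}.
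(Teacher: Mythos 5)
Your proposal is correct and follows essentially the same route as the paper: both reduce the claim to showing each off-diagonal entry $n^{-1}\sum_{i=1}^n V_i d_{ik} m_{ik'}$ vanishes in probability, using the zero conditional mean of $V_i$, a variance bound of order $n^{-1}$ obtained from the Cauchy--Schwarz inequality together with Condition \ref{cond:1}, and Chebyshev's inequality. The only (harmless) addition is your explicit remark that the diagonal blocks are deterministic and handled directly by Condition \ref{cond:2}, which the paper leaves implicit.
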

\begin{proof}
It suffices to show that $n^{-1}\sum_{i=1}^nV_id_{ik}m_{ik'}$ converges in probability to zero for any $k=1,...,K_D$, $k' = 1,...,K_M$. Trivially, $\E(n^{-1}\sum_{i=1}^nV_id_{ik}m_{ik'}\mid\cF) = 0$. Through Chebyshev's inequality, we can show that $\var(n^{-1}\sum_{i=1}^nV_id_{ik}m_{ik'}\mid\cF) \rightarrow 0$ to complete the proof.  The proof utilizes Condition \ref{cond:1}, along with the Cauchy-Schwarz inequality.
\begin{align*}\var\left(n^{-1}\sum_{i=1}^nV_id_{ik}m_{ik'}\mid\cF\right)&=n^{-2}\sum_{i=1}^n(d_{ik})^2(m_{ik'}^2)\\ 
&\leq n^{-2}
\left(\sum_{i=1}^nd_{ik}^4\right)^{1/2}\left(\sum_{i=1}^nm_{ik'}^4\right)^{1/2}  \leq C/n
\end{align*}
by Condition \ref{cond:1}, which tends to zero as $n\rightarrow \infty$.
\end{proof}
\begin{lemma} \label{lemma:2}$n^{-1}\sum_{i=1}^nV_iY_id_{ik}$ converges in probability to $\underset{n\rightarrow\infty}{\lim}n^{-1}\sum_{i=1}^n(\ell_{i1}-\ell_{i2})d_{ik}$ for any $k=1,...,K_D$. Similarly, $n^{-1}\sum_{i=1}^nY_im_{ik'}$ converges in probability to $\underset{n\rightarrow\infty}{\lim} n^{-1}\sum_{i=1}^n\Delta_im_{ik'}$ for any $k'=1,...,K_M$, and $n^{-1}\sum_{i=1}^IV_id_{ik}$ converges in probability to zero for any $k=1,...,K$.
\end{lemma}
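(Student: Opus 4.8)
The plan is to treat all three statements with the same template used in the proof of Lemma \ref{lemma:1}: decompose each average into a deterministic part governed by Condition \ref{cond:2} and a conditionally mean-zero random part whose conditional variance I bound via Chebyshev's inequality together with the Cauchy-Schwarz inequality under Condition \ref{cond:1}. The single device that drives everything is the identity $Y_i = \Delta_i + V_i(\ell_{i1}-\ell_{i2})$ combined with $V_i^2 = 1$ and $\E(V_i\mid\cF)=0$, recalling that the $V_i$ are independent across the $n$ pairs while $\Delta_i$, $\ell_{i1}-\ell_{i2}$, $d_{ik}$, and $m_{ik'}$ are fixed given $\cF$.

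For the first claim I would substitute the expression for $Y_i$ to obtain $V_iY_id_{ik} = (\ell_{i1}-\ell_{i2})d_{ik} + V_i\Delta_id_{ik}$, where the term $V_i^2(\ell_{i1}-\ell_{i2})d_{ik}$ collapses to $(\ell_{i1}-\ell_{i2})d_{ik}$ because $V_i^2=1$. Averaging over $i$, the first summand is nonrandom and converges to the stated limit by Condition \ref{cond:2}, so it remains to show that $n^{-1}\sum_{i=1}^n V_i\Delta_id_{ik}$ converges in probability to zero. This average has conditional mean zero, and since its summands are independent across $i$ with $\var(V_i\mid\cF)=1$, its conditional variance is exactly $n^{-2}\sum_{i=1}^n\Delta_i^2d_{ik}^2$. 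Cauchy-Schwarz bounds this by $n^{-2}(\sum_i\Delta_i^4)^{1/2}(\sum_id_{ik}^4)^{1/2}\leq C/n$ under Condition \ref{cond:1}, which tends to zero, so Chebyshev's inequality delivers the conclusion.

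The second claim is entirely analogous after writing $Y_im_{ik'} = \Delta_im_{ik'} + V_i(\ell_{i1}-\ell_{i2})m_{ik'}$: the deterministic piece $n^{-1}\sum_i\Delta_im_{ik'}$ carries the limit by Condition \ref{cond:2}, while the residual $n^{-1}\sum_iV_i(\ell_{i1}-\ell_{i2})m_{ik'}$ has conditional mean zero and conditional variance $n^{-2}\sum_i(\ell_{i1}-\ell_{i2})^2m_{ik'}^2\leq C/n$ by the same Cauchy-Schwarz step. For the third claim, $n^{-1}\sum_iV_id_{ik}$ is already conditionally mean-zero with conditional variance $n^{-2}\sum_id_{ik}^2$; pairing $d_{ik}^2$ against the constant sequence in Cauchy-Schwarz gives $\sum_id_{ik}^2\leq n^{1/2}(\sum_id_{ik}^4)^{1/2}\leq nC^{1/2}$, so the variance is at most $C^{1/2}/n$ and again vanishes.

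I expect no genuine obstacle here; the only point requiring care is the variance computation, where I must invoke independence of the summands across pairs so that the conditional variance of the sum equals the sum of the per-pair conditional variances, the cross terms vanishing because distinct $V_i$ are independent with mean zero. Once this is noted, each bound reduces to the Cauchy-Schwarz plus bounded-fourth-moment estimate already exhibited in Lemma \ref{lemma:1}.
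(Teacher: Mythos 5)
Your proposal is correct and follows essentially the same route as the paper: compute the conditional mean via the decomposition $Y_i = \Delta_i + V_i(\ell_{i1}-\ell_{i2})$, bound the conditional variance by $C/n$ using Cauchy--Schwarz under Condition \ref{cond:1}, and conclude with Chebyshev's inequality together with Condition \ref{cond:2} for existence of the limit. The only difference is that you write out explicitly the variance calculations that the paper dismisses as ``analogous'' to Lemma \ref{lemma:1}, and these are all correct.
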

\begin{proof}
We prove the result for $n^{-1}\sum_{i=1}^nV_iY_id_{ik}$; the remaining proofs are analogous. First, $E(n^{-1}\sum_{i=1}^nV_iY_id_{ik}\mid\cF) = n^{-1}(\ell_{i1}-\ell_{i2})d_{ik}$. Similar derivations to those in Lemma \ref{lemma:1} yield by Condition \ref{cond:1} that $\var(n^{-1}\sum_{i=1}^nV_iY_id_{ik}\mid\cF)$ converges to zero, which along with Chebyshev's inequality and existence of the limiting value by Condition \ref{cond:2} completes the proof.
\end{proof}
\begin{lemma}\label{lemma:3} $n^{-1}e^T(I-VH_DV)e$ and $n^{-1}e^T(I-H_A)e$ converge in probability to zero.
\end{lemma}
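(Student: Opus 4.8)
The plan is to recognize that the substantive, ``converges to zero'' content of Lemma \ref{lemma:3} concerns the \emph{projection} components $n^{-1}e^TVH_DVe$ and $n^{-1}e^TH_Ae$: each reduces to a quadratic form in a scaled Gram matrix whose ``numerator'' vector tends to zero while the inverse Gram matrix stays bounded, so the whole form vanishes in probability.

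First I would handle $n^{-1}e^TVH_DVe$. Setting $v=Ve=(V_1,\dots,V_n)^T$ and using $V_i^2=1$, I would expand $e^TVH_DVe=v^TH_Dv=(D^Tv)^T(D^TD)^{-1}(D^Tv)$, where the $k$th coordinate of $D^Tv$ is $\sum_{i=1}^nV_id_{ik}$. Distributing the normalization gives $n^{-1}e^TVH_DVe=(n^{-1}D^Tv)^T(n^{-1}D^TD)^{-1}(n^{-1}D^Tv)$. By Lemma \ref{lemma:2}, each coordinate $n^{-1}\sum_{i=1}^nV_id_{ik}\to0$ in probability, so $n^{-1}D^Tv\to0$; by Condition \ref{cond:2}, $n^{-1}D^TD\to\Sigma_D$ with $\Sigma_D$ invertible, so $(n^{-1}D^TD)^{-1}\to\Sigma_D^{-1}$, a finite matrix. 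A Slutsky/continuous-mapping argument then forces the quadratic form to converge in probability to $0^T\Sigma_D^{-1}0=0$.

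Next I would treat $n^{-1}e^TH_Ae$ in the same spirit, writing $e^TH_Ae=(A^Te)^T(A^TA)^{-1}(A^Te)$. The key structural observation is that $A^Te$ splits into its $VD$-block, whose $k$th entry is again $\sum_{i=1}^nV_id_{ik}$, and its $M$-block, whose $k'$th entry is $e^Tm_{k'}=0$ \emph{exactly}, by Fact \ref{fact:1}. Thus $n^{-1}A^Te\to0$ in probability (its $M$-part being identically zero), while Lemma \ref{lemma:1} gives that $n^{-1}A^TA$ converges to the invertible block-diagonal matrix with blocks $\Sigma_D$ and $\Sigma_M$, so $(n^{-1}A^TA)^{-1}$ converges to a finite limit. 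The identical quadratic-form argument yields $n^{-1}e^TH_Ae\to0$ in probability.

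The only delicate point---the main obstacle---is justifying the passage from coordinatewise convergence in probability of the vectors $n^{-1}D^Tv$ and $n^{-1}A^Te$ to convergence of the full quadratic forms. This requires that $(n^{-1}D^TD)^{-1}$ and $(n^{-1}A^TA)^{-1}$ be bounded in probability, which follows from their convergence to invertible limits (continuity of matrix inversion at nonsingular matrices) under Conditions \ref{cond:1}--\ref{cond:2} and Lemma \ref{lemma:1}; one then invokes Slutsky's theorem to multiply a factor tending to zero by a bounded factor. The exact vanishing of the $M$-block of $A^Te$ afforded by Fact \ref{fact:1} is what makes the $g(\cdot)$-transformed covariates drop out cleanly, consistent with the earlier observation that the pairwise covariate averages do not vary across randomizations.
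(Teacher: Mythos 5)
Your proof is correct and follows essentially the same route as the paper's: both reduce the claim to showing the projection term vanishes, using Fact \ref{fact:1} to eliminate the $M$-block of $A^Te$ exactly, Lemma \ref{lemma:2} (via Chebyshev and Condition \ref{cond:2}) to send $n^{-1}D^TVe$ to zero, and the invertible Gram limits from Condition \ref{cond:2} and Lemma \ref{lemma:1} to control the inverted middle factor, with your version merely making the quadratic-form normalization and the Slutsky step more explicit than the paper's $o_p(1)$ bookkeeping. One note: the lemma as printed (``converge in probability to zero'') is evidently a typo for ``one,'' and your reformulation---that $n^{-1}e^TVH_DVe$ and $n^{-1}e^TH_Ae$ tend to zero---matches both the paper's own proof and the way the lemma is invoked in the proof of Theorem \ref{thm:2}, where $n\{e^T(I-H_A)e\}^{-1}$ is taken to converge to $1$.
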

\begin{proof}
We prove the result for $n^{-1}e^T(I-H_A)e$, and in the process we also prove the result for $n^{-1}e^T(I-VH_DV)e$. 
\begin{align*}
n^{-1}e^T(I-H_A)e &= 1 - n^{-1}\sum_{i=1}^na_i(A^TA)^{-1}A^Te\\
&= 1 - n^{-1}\sum_{i=1}^nV_id_i\left\{(n^{-1}D^TD)^{-1}n^{-1}D^TVe\right\} + o_p(1),
\end{align*}
where the $o_p(1)$ term comes from Fact \ref{fact:1}, Condition \ref{cond:2}, and Lemma \ref{lemma:1}. By Condition \ref{cond:2} and Chebyshev's inequality, $n^{-1}D^TVe = n^{-1}\sum_{i=1}^nV_id_i$ converges to a $K_D$ dimensional vector with zeroes in every entry, while $n^{-1}D^TD$ converges to a finite limit, thus completing the proof.
\end{proof}

\section{Proof of Theorem 1}

Recall the definition of $\hat{\tau}_{R*}$ as \begin{align*}
\hat{\tau}_{R*} &= n^{-1}\sum_{i=1}^n(Y_i - V_i\bd_i^T\beta_D).
\end{align*}We first prove the result for $n^{1/2}\hat{\tau}_{R2}$. Let $\hat{\beta}_D$ and $\hat{\beta}_M$ be vectors of length $K_D$ and $K_M$ corresponding to the sample slopes for the covariates $VD$ and $M$ respectively from a regression of $Y$ on $(e,A)$. By Lemmas \ref{lemma:1} and \ref{lemma:2}, $\hat{\beta}_D$ and $\hat{\beta}_M$ converge in probability to $\beta_D$ and $\beta_M$. Along with Lemmas \ref{lemma:2} and \ref{lemma:3}, the discrepancy $n^{1/2}(\hat{\tau}_{R2} - \hat{\tau}_{R*})$ is then of the form 
\begin{align*} n^{1/2}(\hat{\tau}_{R2} - \hat{\tau}_{R*})  &= n^{-1/2}\sum_{i=1}^nm_{i}^T\hat{\beta}_M + n^{-1/2}\sum_{i=1}^n V_id_i^T(\hat{\beta}_D - \beta_D)+ o_p(1) = o_p(1)\end{align*} since by construction $\sum_{i=1}^nm_{i}^T\hat{\beta}_{M} = 0$ by Fact 1, $n^{1/2}\sum_{i=1}^nV_id_i$ is $O_p(1)$ under Conditions \ref{cond:1} and \ref{cond:2}, and $(\hat{\beta}_D - \beta_D)$ is $o_p(1)$. Hence, $n^{1/2}(\hat{\tau}_{R*} - \hat{\tau}_{R*})$ converges in probability to zero as desired.  By Conditions \ref{cond:1} and \ref{cond:2}, Lyapnuov's central limit theorem holds for $n^{1/2}\hat{\tau}_{R*}$ at $\delta=2$, whose asymptotic variance follows from a straightforward calculation. To prove the result for $\hat{\tau}_{R1}$, simply disregard mention of $\beta_M$ and $\hat{\beta}_M$.

\section{Proof of Theorem 2}
Recall the definitions for $S^2_{R1}$ and $S^2_{R2}$:
\begin{align*}
S^2_{R1} &= (n-K_D-1)^{-1}\frac{(Y-\hat{\tau}_{R1}e)^T(I-VH_DV)(Y - \hat{\tau}_{R1}e)}{e^T(I-VH_DV)e},\\
S^2_{R2} &= (n-K_D - K_M-1)^{-1}\frac{(Y-\hat{\tau}_{R2}e)^T(I-H_A)(Y - \hat{\tau}_{R2}e)}{e^T(I-H_A)e}.
\end{align*}

%\begin{align*}
%S^2_{R1} &= S^2_C  + \frac{1}{n^2}\left(\sum_{i=1}^nh_{ii}(1-h_{ii})Y_i^2 + 2\sum_{i<j}h_{ij}^2Y_iY_j - 2\sum_{i < j}h_{ij}(1 + h_{ii} + h_{jj})(V_iY_i)(V_jY_j)\right)\\
%\end{align*}
%S^2_{R2} &= S^2_C  + \frac{1}{n^2}\left(\sum_{i=1}^nh_{ii}(1-h_{ii})Y_i^2 + 2\sum_{i<j}h_{ij}^2Y_iY_j-2\sum_{i < j}h_{ij}(V_iY_i)(V_jY_j)\right)\\
%S^2_{R3} &= S^2_C  + \frac{1}{n^2}\left(\sum_{i=1}^nh_{ii}(1-h_{ii})Y_i^2 -2\sum_{i < j}h_{ij}(V_iY_i)(V_jY_j)\right)

We prove the remark for $S^2_{R2}$. Note first that by Lemma \ref{lemma:3}, $n\{e^T(I-H_A)e\}^{-1}$ converges in probability to 1. Re-expressing $S^2_{R2}$ using standard identities for residuals from linear regression
\begin{align}
nS^2_{R2}  &= n^{-1}\{1+o_p(1)\}(Y-\hat{\tau}_{R2}e)^T(I-H_M)(Y-\hat{\tau}_{R2}e)\\ & - n^{-1}\{1+o_p(1)\}(Y-\hat{\tau}_{R2}e)^T(I-H_M)VD\{D^TV(I-H_M)VD\}^{-1}DV(I-H_M)(Y - \hat{\tau}_{R2}e).\end{align}
We begin by assessing term (1). By Condition \ref{cond:2}, Lemma \ref{lemma:2} and Theorem 1, (1) converges in probability to
\begin{align*}
\underset{n\rightarrow\infty}{\lim}n^{-1}\sum_{i=1}^n(\Delta_i - \bar{\tau})^2 + n^{-1}\sum_{i=1}^n(\ell_{i1}-\ell_{i2})^2 -  n^{-1}(\Delta- \bar{\tau}e)^TH_M(\Delta - \bar{\tau}e). \end{align*}

We now assess term (2). Recall that  any term of the form $n^{-1}M^TVD$ converges in probability to a matrix of all zeroes as described in the proof of Lemma \ref{lemma:1}, and recall that $H_M = M(M^TM)^{-1}M^T$. Hence, by Condition \ref{cond:2}, Lemma \ref{lemma:2}, and Theorem 1, (2) converges in probability to
\begin{align*}
 -\underset{n\rightarrow\infty}{\lim}n^{-1}(\ell_1-\ell_2)^TH_D(\ell_1-\ell_2) = -\beta_D^T\Sigma_D\beta_D.
\end{align*}
Meanwhile, $\var(n^{1/2}\hat{\tau}_{R2}\mid\cF) {\rightarrow} \underset{n\rightarrow\infty}{\lim}n^{-1}\sum_{i=1}^n(\ell_{i1}-\ell_{i2})^2 - \beta_D^T\Sigma_D\beta_D$ as given in Theorem 1. Subtracting these two quantities, $nS^2_{R2} - \var(n^{1/2}\hat{\tau}_{R2}\mid\cF)$ converges in probability to
\begin{align*}
 \underset{n\rightarrow\infty}{\lim}n^{-1}\sum_{i=1}^n(\Delta_i - \bar{\tau})^2 -  n^{-1}(\Delta- \bar{\tau}e)^TH_M(\Delta - \bar{\tau}e)=  \underset{n\rightarrow\infty}{\lim}n^{-1}\sum_{i=1}^n(\Delta_i - \bar{\tau})^2 -  \beta_M^T\Sigma_M\beta_M.
\end{align*} 

The proof for $S^2_{R1}$ follows by replacing $I-H_M$ with $I$ and $\hat{\tau}_{R2}$ with $\hat{\tau}_{R1}$ in (1) and (2) above.

\section{Deriving the Expressions for the Variance Estimators}
We now derive the form of the variance estimators presented in Section 5. We do so for $S^2_{R2}$; the derivation for $S^2_{R1}$ is analogous. In the classical homoskedastic fixed-$X$ regression setting, a regression of $Y$ on a fixed matrix $X = (e, A)$ yields a true variance for the intercept coefficient
\begin{align*}
\var(\hat{\beta}_0) &= \sigma^2\frac{e^T(I-H_A)e}{(e^T(I-H_A)e)^2}  =\sigma^2\frac{1}{e^T(I-H_A)e}.
\end{align*}
One then replaces $\sigma^2$ with the mean squared error, $\hat{\sigma}^2$, to yield the variance estimator. Let $H_X$ be the projection matrix associated with $X = (e,A)$. $I-H_X$ can be attained by iterative projections, that is
\begin{align*}
I - H_X &= I - \frac{(I-H_A)ee^T(I-H_A)}{e^T(I-H_A)e}.
\end{align*}
Recalling that $\hat{\tau}_{R2} = Y^T(I-H_A)e/(e^T(I-H_A)e)$, the sum of square errors is then of the form
\begin{align*}
SSE &= Y^T\left( I - \frac{(I-H_A)ee^T(I-H_A)}{e^T(I-H_A)e}\right)Y\\
&= Y^TY - \hat{\tau}_{R2}e^T(I-H_A)e\hat{\tau}_{R2}.
\end{align*}
Further, $\hat{\tau}_{R2}e^T(I-H_A)e\hat{\tau}_{R2} = e^T(I-H_A)ee^T(I-H_A)Y/(e^T(I-H_A)e) = e^T(I-H_A)Y$. Hence
\begin{align*}
SSE &= Y^TY - \hat{\tau}_{R2}e^T(I-H_A)e\hat{\tau}_{R2}\\
&= (Y - \hat{\tau}_{R2})(I - H_A)(Y-\hat{\tau}_{R2}).
\end{align*}The form for $S^2_{R2} = SSE/((n-K_D-K_M-1)e^T(I-H_A)e)$ then follows.
\section{Additional simulation results}
\subsection{Confidence intervals and standard errors for the sample average treatment effect}
In each of the settings described in \S 6.1 of the article, we simulate $S=1000$ sets of experimental units each containing $n$ matched pairs. Within each sample, we conduct $B=1000$ randomizations. We conducted our simulation study with experiments of size $n=25, 50, 100$ and $n=500$. Given the experiment sizes considered, the extent to which the results presented in this work are applicable for experiments of these sizes is far but certain at the onset of this investigation.

The first three columns characterize the actual coverage properties of 95\% confidence intervals for the sample average treatment effect constructed using the standard formulae for the standard errors of $\hat{\tau}_{C}$, $\hat{\tau}_{R1}$, and $\hat{\tau}_{R2}$ respectively. The second three columns compare the relative values of these standard errors based on these three estimators (and hence, the relative width of confidence intervals across experiments). The final two column shows the relative values for the true root mean squared errors of the regression estimators $\hat{\tau}_{R1}$ and $\hat{\tau}_{R2}$ relative to the classical estimator $\hat{\tau}_C$. The initial number provided in each entry is the median value across the $1000$ sets of experimental units, and the endpoints of the interval provided in parenthesis are the 2.5\% and 97.5\% quantiles across the $1000$ samples.

\begin{table}
\begin{center}
\caption{Classical and regression assisted average treatment effect estimators with a misspecified response function, sample average treatment effect}{
\begin{tabular}{l c c c c c c c c}
&\multicolumn{3}{c}{True Coverage of 95\% CIs} &  \multicolumn{3}{c}{100 $\times$ Stand. Error Ratios}& \multicolumn{2}{c}{100 $\times$ RMSE Ratios}\\ [1 pt]
%\cline{2-4} \cline{5-7}
	&$C$ & $R1$ &$R2$	&$R1:C$&$R2:C$	&  $R2:R1$ &$R1:C$&$R2:C$\\
Parallel&	&&&&&&&\\
\multirow{2}{*}{$n=25$} & 95 &95& 94 & 55	&55& 100	&  54 & 55 \\
 &(94, 96) & (94, 96) & (89, 97) & (37, 77) & (37, 77) & (91, 107) & (37, 78) & (38, 85)\\
\multirow{2}{*}{$n=50$}& 95 &95& 94 & 54	&53& 99	&  54 & 55 \\
 &(94, 96) & (94, 96) & (92, 96) & (41, 69) & (41, 67) & (94, 101) & (41, 69) & (41, 72)\\
 \multirow{2}{*}{$n=100$}& 95 &95& 95 & 54	&54& 99	&  54 & 55 \\
 &(94, 96) & (94, 96) & (93, 96) & (45, 65) & (44, 64) & (96, 100) & (44, 65) & (45, 67)\\
 \multirow{2}{*}{$n=500$}& 95 &95& 95 & 54	&54& 100	&  54 & 55 \\
 &(94, 96) & (94, 96) & (94, 96) & (50, 60) & (50, 60) & (99, 100) & (50, 61) & (50, 61)\\
\multicolumn{2}{l}{Nonparallel}	&&&&&&&\\
 \multirow{2}{*}{$n=25$}& 100 &100& 99 & 90	&54& 60	&  81 & 63 \\
 &(99, 100) & (99, 100) & (96, 100) & (77, 90) & (40, 71) & (43, 83) & (55, 125) & (43, 98)\\
\multirow{2}{*}{$n=50$}& 100 &100& 100 & 86	&53& 61	&  64 & 53 \\
 &(99, 100) & (99, 100) & (99, 100) & (78, 94) & (43, 63) & (50, 74) & (48, 88) & (39, 74)\\
\multirow{2}{*}{$n=100$}& 100 &100& 100 & 86	&53& 62	&  55 & 48 \\
&(100, 100) & (100, 100) & (100, 100) & (80, 90) & (46, 60) & (53, 71) & (45, 69) & (39, 61)\\
 \multirow{2}{*}{$n=500$}& 100 &100& 100 & 85	&53& 62	&  45 & 44 \\
&(100, 100) & (100, 100) & (100, 100) & (82, 97) & (50, 56) & (53, 71) & (40, 51) & (39, 50)\\
\end{tabular}}\label{tab:results}
\end{center}
\end{table}

In the parallel response setting with $\mu_T(\cdot) = \mu_C(\cdot)$, the performed inference will be asymptotically exact rather conservative since here the treatment effects are constant. As Table 1 indicates, for all values of $n$ the median coverage of intervals constructed using $\hat{\tau}_C$ and $\hat{\tau}_{R1}$ for $\bar{\tau}^{(s)}$ over the $S=1000$ sets of experimental units was 95\%, with coverage falling in the range (94\%, 96\%) for 95\% of the simulated experiments. For $\hat{\tau}_{R2}$, the median coverage was 94\% at $n=25$, with a 95\% range of (89\%, 97\%) indicating a more drastic deviation from claimed coverage across randomizations in some sets of experimental units with $n$ small. The reason for this is not small sample bias in the estimator $\hat{\tau}_{R2}$ itself, but rather a systematic downward bias of $S_{R2}$ relative to the true value of $\var(\hat{\tau}_{R2}\mid\cF)^{1/2}$. In fact, the 95\% range for $E(S_{R2}|\cF)/\var(\hat{\tau}_{R2}\mid\cF)^{1/2}$ across samples was (0.81, 1.05) with $n=25$, which may justify the use of heteroskedasticy-consistent standard errors such as HC3 in smaller samples \citep{lon00}. The coverage improves quickly as $n$ increases, and presents little reason for alarm once $n=100$. The median ratio of both the standard errors and the root mean squared errors from $\hat{\tau}_{R1}$ and $\hat{\tau}_{R2}$ relative to $\hat{\tau}_C$ are slightly over 1/2 across experiments, indicating the potential benefits from estimation and inference using regression adjustment in this setting. As predicted by Theorem 2, the ratios of the standard error estimates based on $\hat{\tau}_{R1}$ and $\hat{\tau}_{R2}$ are clustered about 1 due to the treatment effect being additive, indicating that even in finite samples there is little to be gained from using $\hat{\tau}_{R2}$ over $\hat{\tau}_{R1}$ under additivity.

For the settings with a nonparallel response surface, intervals constructed will be conservative since the treatment effect is no longer additive. Indeed, we see that for all values of $n$ the median coverage across experiments was either 99\% or 100\% for confidence intervals based on all three estimators. In these settings, we also observe the potential benefits of using the estimator $\hat{\tau}_{R2}$. For $\hat{\tau}_{R1}$, the standard errors used to construct confidence intervals were only slightly smaller than those constructed using $\hat{\tau}_{C}$, with the median ratio of the two across experiments decreasing from 0.9 for $n=25$ to 0.85 at $n=500$. On the other hand, the median ratio of interval widths constructed using $\hat{\tau}_{R2}$ relative to those using $\hat{\tau}_{C}$ was slightly over 1/2,  while the median ratio relative to those constructed based on $\hat{\tau}_{R1}$ was $0.6$. To emphasize, this result does not stem from an inherently lower variance for the estimator $\hat{\tau}_{R2}$, as  reflected in the ratios of the true root mean square errors for estimation using $\hat{\tau}_{C}$ relative to $\hat{\tau}_{R1}$ and relative to $\hat{\tau}_{R2}$ shown in Table \ref{tab:results}. Indeed, the mean squared errors for $\hat{\tau}_{R1}$ and $\hat{\tau}_{R2}$ are equal in the limit by Theorem 1. Rather, as Corollary 1 describes, this stems from $S^2_{R2}$ being a tighter upper bound for $\var(\hat{\tau}_{R2}\mid\cF)$ than $S^2_{R1}$ is for $\var(\hat{\tau}_{R1}\mid\cF)$. Hence, intervals based on $\hat{\tau}_{R2}$ are shorter than those constructed using $\hat{\tau}_{R1}$ while maintaining the proper coverage.

\subsection{Coverage in a superpopulation simulation}

In this section, we modify the simulation setting described in \S 6.1 of the manuscript to instead assess coverage for the population average treatment effect. We keep the same functional forms for the response functions under treatment and control, and maintain the same misspecified form for the covariates available to the practitioner. In the $s$th of $S=10000$ sets of experimental units, we simply conduct one random allocation to treatment or control in accordance with the paired design. We then compute the estimators in question, corresponding standard errors, and confidence intervals. Rather than attempting to cover $\bar{\tau}^{(s)}$, the intervals are now assessed based on covering $E\{\bar{\tau}^{(s)}\} = 0$. For $\hat{\tau}_{R2}$, we compute intervals based on both $S^2_{R2}$ and its superpopulation modification $S^2_{R2,P} = S^2_{R2} + n^{-1}\hat{\beta}_M^T\hat{\Sigma}_M\hat{\beta}_M.$ We do so for $n=25$ and $n=1000$.

Table \ref{tab:resultssuper} present the results. In the parallel response setting, the results are quite similar to those presented in \S E.1, where we instead constructed confidence intervals for the sample average treatment effect. The reason is that the treatment effect was constant for all individuals, such that $\var\{\bar{\tau}^{(c)}\} = 0$ in the superpopulation simulation. In the case of constant effects $S^2_{R1}$, $S^2_{R2}$, and $S^2_{R2,P}$ are asymptotically equivalent, and $S^2_{R2}$ can be used to yield valid confidence intervals in this superpopulation setting. The estimators $\hat{\tau}_{R1}$ and $\hat{\tau}_{R2}$ remain asymptotically equivalent, and continue to improve efficiency over $\hat{\tau}_C$.

\begin{table}
\begin{center}
\caption{Classical and regression assisted average treatment effect estimators with a misspecified response function, population average treatment effect. The table shows coverage of 95\% normal-based intervals, the ratio of the average estimated standard error to the true standard deviation of the estimator (in parentheses), and the ratio of the standard deviations of the estimators.}{
\begin{tabular}{l c c c c c c c}
&\multicolumn{4}{c}{True Coverage of 95\% CIs} & \multicolumn{3}{c}{100 $\times$ RMSE Ratios}\\ [1 pt]
&\multicolumn{4}{c}{(100 $\times$ Estimated SE / True SE)} & \multicolumn{3}{c}{}\\
%\cline{2-4} \cline{5-7}

	&$C$ & $R1$ &$R2$&$R2,P$	&  $R2:R1$ &$R1:C$&$R2:C$\\
Parallel&	&&&&&\\
\multirow{2}{*}{$n=25$}& 95.0 &95.2& 93.8 & 96.8	&\multirow{2}{*}{107}& \multirow{2}{*}{54.1}	&  \multirow{2}{*}{58.1}\\
&  (98.0) & (99.7) & (92.7) & (104) &  & & \\
\\
 \multirow{2}{*}{$n=1000$}& 94.8 &95.0& 94.9 & 95.0	&\multirow{2}{*}{100}&\multirow{2}{*}{54.1}	&  \multirow{2}{*}{54.3}  \\
 &(99.3) & (99.5) & (99.2) & (99.5) &  &  & \\
 \\
\multicolumn{2}{l}{Nonparallel}	&&&&&&\\
 \multirow{2}{*}{$n=25$}& 95.3 &94.3& 77.2 & 95.6	&\multirow{2}{*}{97.9}& \multirow{2}{*}{92.2}	&  \multirow{2}{*}{90.2}  \\
 &  (99.0) & (96.3) & (59.3) & (99.8) &  & & \\
 \\
\multirow{2}{*}{$n=1000$}& 95.1 &95.2& 78.2 & 95.1	&\multirow{2}{*}{99.9}& \multirow{2}{*}{84.7}&  \multirow{2}{*}{84.6}\\
&(100) & (100) & (62.8) & (101) & & &\\
\end{tabular}}\label{tab:resultssuper}
\end{center}
\end{table}

As the nonparallel case illustrates, validity of $S^2_{R2}$ in the superpopulation context described is unique to the case of constant treatment effects. In this simulation, effect heterogeneity was present. $S^2_{R2}$ attains its smaller value relative to $S^2_{R1}$ by means of exploiting effect heterogeneity. Yet as we see, $S^2_{R1}$ yields confidence intervals that meet their coverage guarantee, while $S^2_{R2}$ creates confidence intervals that fall well short. As the table presents, the ratio of the average standard error $S_{R2}$ to the true standard deviation of $\hat{\tau}_{R2}$ is far below 1, further indicative of $S^2_{R2}$ being anticonservative. The downwards bias does not disappear asymptotically, as the simulation study with $n=1000$ seeks to emphasize. The corrected version, $S^2_{R2,P}$, adds back a consistent estimate of the downwards bias in $S^2_{R2}$, which reestablishes proper coverage for intervals using $\hat{\tau}_{R2}$. Once again, $\hat{\tau}_{R1}$ and $\hat{\tau}_{R2}$ behave quite similarly in terms of estimator variance (indicative of their asymptotic equivalence), and both dominate the classical difference-in-means $\hat{\tau}_C$. 
\newpage
\subsection{A numerical example}
The following \texttt{R} code simulates an experiment in the nonparallel simulation setting from \S 6.1 of the manuscript with $I=25$, showing the numerical values for the treatment effect estimators and the corresponding standard errors.

\begin{verbatimtab}
library(MASS)
#Nonparallel Setting, n=25 Pairs
set.seed(01562)
#Set number of pairs
n = 25
#Simulate (unknown) covariates W, error, and potential outcomes
w1 = mvrnorm(n, c(0,0,0,0), diag(4))
w2 = w1 + mvrnorm(n, c(0,0,0,0), .25*diag(4))
e1 = rnorm(n, 0, 1)
e2 = rnorm(n, 0, 1)
rt1 = w1%*%(c(27.4,13.7,13.7, 13.7)) + e1 
rc1 = w1%*%(c(13.7,13.7,3, 27.4)) + e1
rt2= w2%*%(c(27.4,13.7,13.7, 13.7)) + e2
rc2= w2%*%(c(13.7,13.7,3, 27.4)) + e2

#Form Delta (pairwise average of treatment effects) 
tau1 = rt1-rc1
tau2 = rt2-rc2
Delta = (tau1+tau2)/2

#Form eta (difference in levels)
l1 = (rt1+rc1)/2
l2 = (rt2+rc2)/2
eta = l1-l2

#Form the observed covariates as nonlinear functions of W
X1 = cbind(exp(w1[,1]/2), w1[,2]/(1+exp(w1[,1])) + 10, 
           (w1[,1]*w1[,3]/25 + .6)^3, (w1[,2] + w1[,4] + 20)^2)
X2 = cbind(exp(w2[,1]/2), w2[,2]/(1+exp(w2[,1])) + 10, 
           (w2[,1]*w2[,3]/25 + .6)^3, (w2[,2] + w2[,4] + 20)^2)

#Form D and M based on X
D = X1-X2
D = as.matrix(D)
M = (X1+X2)/2 - (rep(1, n))%*%t(colMeans(rbind(X1, X2)))
KD = ncol(D)
KM = ncol(M)

#Sample and Population ATE in simulation
SATE = (sum(rt1-rc1) + sum(rt2-rc2))/(2*I)
PATE = 0

#Randomly assign treatment across pairs; form V
Z = runif(n) < .5
V = 2*Z-1

#Here are the treated-minus-control response differences
Y = Delta + V*eta

#Here are the treated-minus-control covariate differences
VD = diag(V)%*%D

#Form the classical estimator and its standard error
tauhat_C = mean(Y)
SE_C = sd(Y)/sqrt(n)

#Now, form tauhat_R1 with its standard error
reg1 = lm(Y~VD)
tauhat_R1 = reg1$coef[1]
SE_R1 = summary(reg1)$coef[1,2]

#Now, tauhat_R2. The difference is the inclusion of M in
#the regression
reg2 = lm(Y~VD+M)
tauhat_R2 = reg2$coef[1]
SE_R2 = summary(reg2)$coef[1,2]

#Form the superpopulation version of SE_R2
betaM = reg2$coef[6:9]
SE_R2P = sqrt(SE_R2^2 + t(betaM)%*%cov(M)%*%betaM/n)

#Compare the point estimates and standard errors
compare25 = rbind(c(tauhat_C, tauhat_R1, tauhat_R2, tauhat_R2), 
                  c(SE_C, SE_R1, SE_R2, SE_R2P))
colnames(compare25) = c("tauhat_C", "tauhat_R1", "tauhat_R2", 
                        "tauhat_R2P")
rownames(compare25) = c("Estimate", "Standard Error")
compare25

#Output
#                tauhat_C tauhat_R1 tauhat_R2 tauhat_R2P
# Estimate       3.640938 -1.884071 -2.728688  -2.728688
# Standard Error 5.483459  3.935346  2.966589   4.077766
\end{verbatimtab}

\bibliographystyle{apalike}
\bibliography{bibliography.bib}

\end{document}